\documentclass{article}


\usepackage[preprint, nonatbib]{neurips_2021}
\usepackage{amsthm}
\newtheorem{proposition}{Proposition}
\newtheorem{definition}{Definition}
\newcommand{\argmin}{\mathop{\mathrm{argmin}}}
\newcommand{\argmax}{\mathop{\mathrm{argmax}}}




\usepackage[utf8]{inputenc} 
\usepackage[T1]{fontenc}    
\usepackage{hyperref}       
\usepackage{url}            
\usepackage{booktabs}       
\usepackage{amsfonts}       
\usepackage{nicefrac}       
\usepackage{microtype}      
\usepackage{xcolor}         
\usepackage{amsthm, amsmath, amsfonts, mathtools, amssymb} 
\usepackage{sgame, tikz} 
\usetikzlibrary{trees, calc} 
\usepackage{subfig} 

\title{Improving Social Welfare While Preserving Autonomy via a Pareto Mediator}

%


\author{%
  Stephen McAleer \\
  Department of Computer Science\\
  University of California, Irvine\\
  \texttt{smcaleer@uci.edu}
   \And
   John Lanier \\
   Department of Computer Science\\
   University of California, Irvine\\
   \texttt{jblanier@uci.edu}
   \AND
   Michael Dennis \\
   Department of Computer Science \\
   University of California, Berkeley \\
   \texttt{michael\_dennis@berkeley.edu}
   \And
   Pierre Baldi \\
   Department of Computer Science \\
   University of California, Irvine \\
   \texttt{pfbaldi@uci.edu}
   \And
   Roy Fox \\
   Department of Computer Science \\
   University of California, Irvine \\
   \texttt{royf@uci.edu}
}

\begin{document}

\maketitle

\begin{abstract}
Machine learning algorithms often make decisions on behalf of agents with varied and sometimes conflicting interests. In domains where agents can choose to take their own action or delegate their action to a central mediator, an open question is how mediators should take actions on behalf of delegating agents. The main existing approach uses delegating agents to punish non-delegating agents in an attempt to get all agents to delegate, which tends to be costly for all. We introduce a Pareto Mediator which aims to improve outcomes for delegating agents without making any of them worse off. Our experiments in random normal form games, a restaurant recommendation game, and a reinforcement learning sequential social dilemma show that the Pareto Mediator greatly increases social welfare.  Also, even when the Pareto Mediator is based on an incorrect model of agent utility, performance gracefully degrades to the pre-intervention level, due to the individual autonomy preserved by the voluntary mediator.
\end{abstract}

\section{Introduction}
We often choose to delegate our decisions to algorithms. 
What should a decision-support system do when multiple people choose to delegate their actions to the same algorithm? 
When someone uses an app to get directions, accepts recommendations for who to follow on social media, or takes a recommended job on a gig app, they are allowing an algorithm to make a decision for them which not only affects their own commute, news feed, or income, but also affects the outcomes of others. 



In this paper we consider a middle ground between central planning and anarchy 
where a third-party mediator is able to act on behalf of any player who chooses to delegate control to the mediator.
The mediator simultaneously plays moves on behalf of all players who have chosen to delegate to the mediator. This notion is natural in settings where there already exist reliable  administrators  in  place. For example, markets that take place on managed platforms, or computer networks that communicate through routers. 

Existing literature has proposed mediators that can implement efficient mediated equilibria~\cite{kalai2010commitment, rozenfeld2007routing, tennenholtz2004program}, but that probably will not work in practice, because they rely on punishing agents who do not delegate which can be costly to all agents involved. For example, on a platform for making restaurant reservations, if a single person does not allow the platform to choose a restaurant for them, the platform would send everyone else to the same restaurant to punish them, harming everyone.   

We introduce a mediator that is much less risky for agents to delegate actions to. Instead of using delegating agents as punishers, our mediator tries to maximize their utility subject to a Pareto improvement constraint. Our Pareto Mediator works as follows: every agent submits an action in the original game and a binary value signifying if they want the mediator to act for them. After everyone submits their actions, the mediator finds a set of actions for the delegating agents in the original game that maximizes the delegating agents' social welfare, without making any delegating agent worse off. 

Table \ref{prisoners} illustrates the proposed mediator in the prisoner’s dilemma. In the mediated game on the right, the top left quadrant is the original game which corresponds to both players choosing not to delegate. The third and fourth row and column correspond to taking the first and second actions but also choosing to delegate. The outcome $(2, 2)$ associated with both players cooperating is now a strong Nash equilibrium (D$+$, D$+$), with higher social welfare than the original Nash equilibrium. 

\begin{table}[t] 
\centering
\begin{game}{2}{2}[][]
	    &  C      &  D     \\
	 C  &  $2, 2$ & $0, 3$  \\
	 D  &  $3, 0$ & $1, 1$\\
\end{game}
\hspace{.2\columnwidth}
\begin{game}{4}{4}[][]
	    &  C$-$      &  D$-$     &  C$+$    & D$+$ \\
	 C$-$  &  $2, 2$ & $0, 3$ & $2, 2$ & $0, 3$ \\
	 D$-$  &  $3, 0$ & $1, 1$ & $3, 0$ & $1, 1$ \\
	 C$+$  &  $2, 2$ & $0, 3$ & $2, 2$ & $0, 3$ \\
	 D$+$  &  $3, 0$ & $1, 1$ & $3, 0$ & $2, 2$ \\
\end{game}
\vspace{1em}
\caption{Prisoner's Dilemma with a Pareto Mediator. \textbf{Left:} the original prisoner's dilemma. The only Nash equilibrium is for both players to defect (D, D), to receive a reward of $(1,1)$. \textbf{Right:} the mediated game with a Pareto Mediator. Agents submit an action (C or D) and a bit ($+$ or $-$) indicating whether or not they want to delegate to the mediator. If up to one agent delegates the game is unchanged. If both agents delegate, the Pareto Mediator replaces each outcome with another outcomes that Pareto dominates it. In this case the Pareto Mediator replaces $(1,1)$ with $(2,2)$. This results in (D$+$, D$+$) becoming a strong Nash equilibrium in the mediated game, with higher utility for both players.}\label{prisoners}
\end{table}

The Pareto Mediator has several desirable theoretical properties. First, in two-player games, delegating an action always weakly dominates taking that action in the original game. Second, all pure Nash equilibria in the mediated game have at least as much total utility as any pure Nash equilibria in the original game. Third, the Pareto Mediator is much less risky than the Punishing Mediator in most games of interest. If an agent is uncertain whether all other agents will delegate, it can be very risky to delegate to the Punishing Mediator, because the agent could carry the burden of punishing others. Conversely, delegating to the Pareto Mediator tends to be much less risky because the group of delegators are all guaranteed to get higher utility than they would have if none of them delegated. 

We run simulations with online learning agents on normal form games and a restaurant reservation game. Empirically, we find that the Pareto Mediator achieves much higher social welfare than the Punishing Mediator. Moreover, a much higher percentage of agents choose to delegate actions to the Pareto Mediator than to the Punishing Mediator. Importantly, when the mediator does not have true rewards for the agents, the agents choose not to delegate actions, and the welfare is better than a central planner would have achieved. We also run experiments with reinforcement learning (RL) agents in sequential social dilemmas and find that the agents always choose to delegate their actions to the Pareto Mediator and get much higher reward than with the Punishing Mediator.  

\section{Preliminaries}

\subsection{Games and Mediators}
A normal form game $\Gamma$ has three elements: a finite set of players $P = \{1, ..., N\}$, a pure strategy (action) space $S_i$ for each player $i$, and payoff functions $u_i(s)$ for each profile $s = (s_1, ..., s_N)$ of strategies. All players other than some player $i$ are denoted by $-i$. We focus on finite games where $S = \bigtimes_i S_i$ is finite. A mixed strategy $\sigma_i$ is a probability distribution over actions. The space of player $i$'s mixed strategies is denoted as $\Sigma_i$, where $\sigma_i(s_i)$ is the probability player $i$ assigns to $s_i$. We denote the expected payoff to player $i$ of mixed profile $\sigma$ by $u_i(\sigma)$. 

\begin{definition}
A mixed-strategy profile $\sigma^*$ is a Nash equilibrium if, for all players $i$, $u_i(\sigma^*_i, \sigma^*_{-i}) \geq u_i(s_i, \sigma^*_{-i})$ for all $s_i \in S_i$. If $\sigma^* \in S$ then it is a pure Nash equilibrium.
\end{definition}

Mediators are third parties to whom players in a game can delegate actions. In this paper we consider mediators that can condition on the actions of non-delegating players. 

A mediated game $\Gamma(M)$ takes a game $\Gamma$ and adds an extra binary action that agents take along with their original action to indicate whether to delegate their action to the mediator $M$. So for a player $i$, their mediated strategy space is $S_i \times \{0,1\}$. A mediator $M$ is a function that takes in a joint mediated strategy profile $s_m = (s_i, d_i)_{i \in P}$ and outputs a joint strategy profile $s' = M(s_m)$ for the original game, thus setting $u_i(s_m) = u_i(M(s_m))$ in the mediated game. Mediators may only act on behalf of agents that choose to delegate, so that $s'_i = s_i$ if $d_i = 0$. If $d_i = 1$, the mediator is free to choose a new action $s'_i$ for that player. We denote the set of delegating agents by $D \subseteq P$. 


\begin{figure}
\centering
\includegraphics[width=.3\columnwidth]{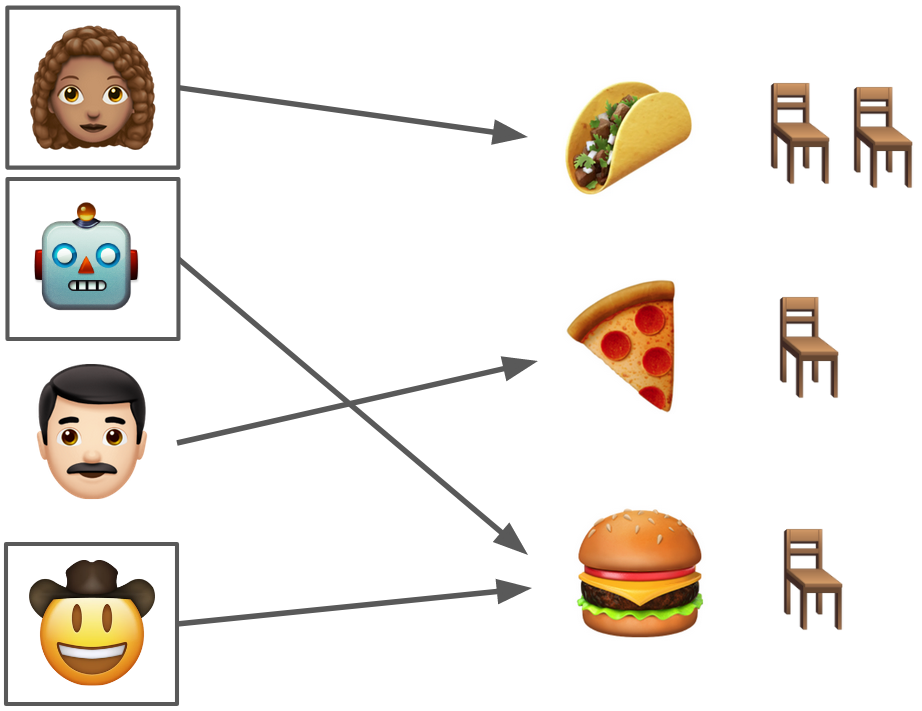}
\hspace{.03\columnwidth}
\includegraphics[width=.3\columnwidth]{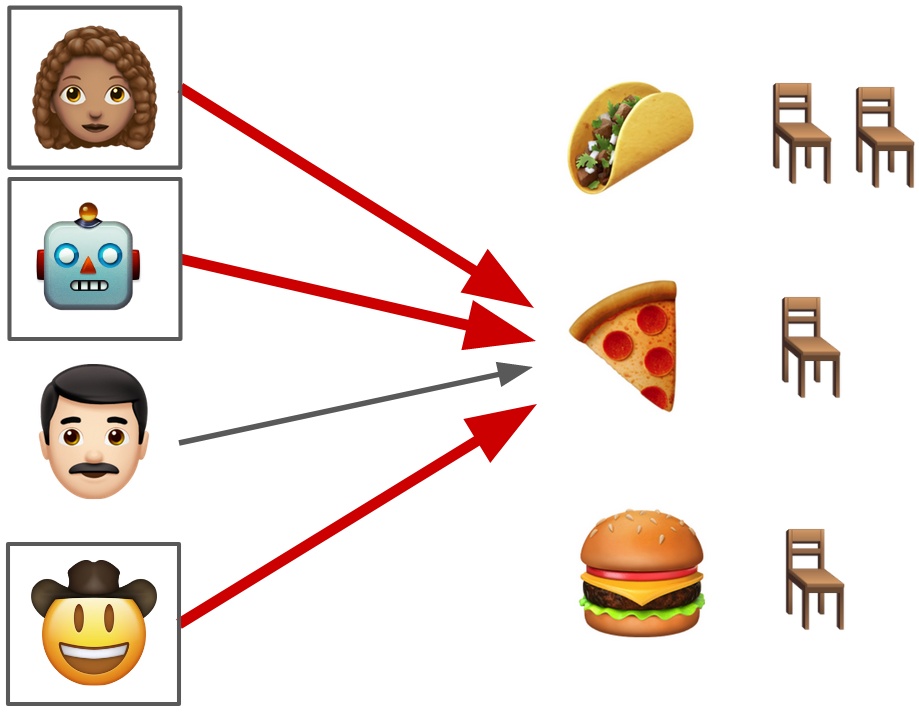}
\hspace{.03\columnwidth}
\includegraphics[width=.3\columnwidth]{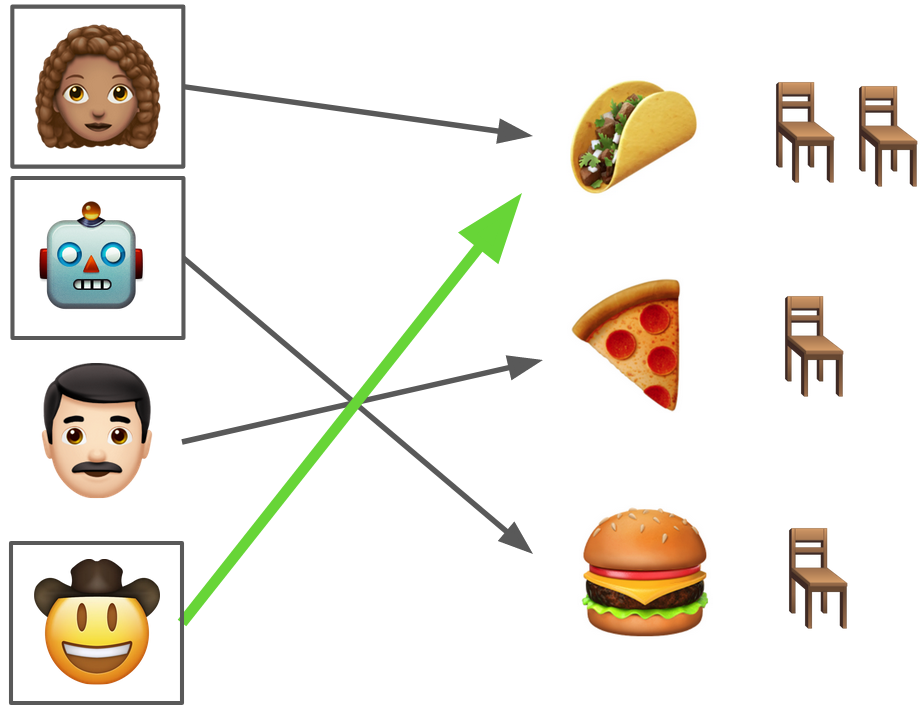}
\caption{\textbf{Left:} In the restaurant game (Section \ref{sec:rest}), agents simultaneously reserve restaurants (arrows) and can choose to delegate (square around their emoji). Restaurants have known capacity (number of chairs). If restaurant reservations exceed capacity, a random subset of agents get to eat there and the rest don't get to eat anywhere. \textbf{Center:} The Punishing Mediator makes all delegating agents go to the same restaurant as the non-delegating agent to maximally punish him, but at a high cost to the punishing agents as well. \textbf{Right:} The Pareto Mediator makes the cowboy go to a restaurant with extra capacity, which doesn't make any delegating agent worse off but makes the cowboy and the robot better off.}\label{fig:diagram}
\end{figure}

\subsection{Comparison to Other Concepts}
Mediators can be seen as a middle ground between anarchy, represented by the original game, and central planning, in which the best strategy profile is dictated. In our experiments and in previous theoretical results, it has been shown that allowing agents the ability to delegate to mediators can improve social welfare compared to social welfare under anarchy. Central planning can implement any outcome by forcing all agents to take particular actions, but in many situations such forcing may not be feasible or desirable. For example, when the mediator does not have full information about agent utilities, central planning may produce inferior solutions. Furthermore, central solutions concentrate power in the hands of one agent, which can make the system vulnerable to malicious attacks. 

Another type of mediator is one that is allowed to jointly recommend actions to all players. If no player has incentive to deviate from their recommended action, the joint distribution is called a \textit{correlated equilibrium} \cite{aumann1987correlated}. This mediator is weaker than the mediators we consider because it must allow for the possibility that its suggested actions are not subsequently taken by the players. On the other hand, correlated equilibria are more broadly applicable, due to the ubiquity of situations in which mediators are able to recommend actions but not take them on behalf of agents. However, as with the mediators we study, there remains an open question of how to choose which correlated equilibrium to suggest to the agents when there are several correlated equilibria.  

\subsection{Punishing Mediator}
One existing approach to implement desired outcomes by mediated equilibria is to use what we term a ``Punishing Mediator'' \cite{kalai2010commitment, rozenfeld2007routing, tennenholtz2004program}. If every agent in the game decides to delegate, then the Punishing Mediator will take actions for every agent to lead to the desired outcome. Otherwise, it will use the agents that did delegate to maximally punish the agents that did not delegate. The idea behind this mediator is to incentivize all agents to delegate, at which point the mediator implements an outcome that satisfies a social choice criterion.  

If not all agents delegate, the Punishing Mediator minimizes (perhaps approximately) the social welfare of non-delegating agents
\begin{equation}
    \begin{aligned}
    \argmin_{s'} \quad & \sum_{i \in P \setminus D} u_i(s')\\
    \textrm{s.t.} \quad & s'_j = s_j & \forall j \in P\setminus D.    \\
    \end{aligned}
    \label{punish}
\end{equation}

Otherwise if all agents delegate, the Punishing Mediator outputs the outcome that maximizes some social welfare function. In this paper we have the social welfare function be the sum of utilities 

\begin{equation}
    \begin{aligned}
    \argmax_{s'} \quad & \sum_{i \in P} u_i(s'),\\
    \end{aligned} \label{nice}
\end{equation}
but it could also be another social welfare function such as maximin.

Formally, the Punishing Mediator outputs the following joint strategy profile:
\[
    M(s_m) = \begin{cases}
        \text{Eq. }\eqref{punish} & \text{if } D \subsetneq P\\
        \text{Eq. }\eqref{nice} & \text{if } D = P.\\
        \end{cases}
\]


In order to incentivize every agent to delegate, the Punishing Mediator is very harsh when not every agent delegates. In games with a large number of agents that are learning and exploring, it can be unlikely for every agent to choose to delegate at the same time. In this setting, the Punishing Mediator lowers the utility of the non-delegating agents by having them punished and may also lower the utility of the delegating agents by having them punish the non-delegating agents. As we show in our experiments, this may cause all agents to have very low utility, and to opt out of delegating and punishing. In the next section, we introduce the Pareto Mediator which, instead of forcing delegating agents to punish non-delegating agents, tries to simply improve the delegating agents' utility. 


\section{Pareto Mediator}
The Pareto Mediator works by trying to make all the delegating agents better off without making any of them worse off. When the Pareto Mediator receives the joint strategy $s_m = (s_i, d_i)_{i \in P}$ of all players in the mediated game, it first calculates for each agent $i$ the utility $u_i(s)$ that it would have gotten in the original game. Next, the mediator finds mediated outcomes where every delegating agent $i \in D$ gets utility at least $u_i(s)$. Of these outcomes, it picks the one that maximizes some social welfare function for the delegating agents. As before, in this paper we use the sum of utilities as our social welfare function, but any could be used instead. 

Formally, the Pareto Mediator solves a constrained optimization problem to maximize social welfare for the delegating agents while ensuring that the utility they each get is not lower than they would have if nobody delegated. Then for a given game and joint mediated strategy $s_m$, if two or more agents delegate, the Pareto Mediator returns a (possibly approximate) solution to the following optimization problem:

\begin{equation}
    \begin{aligned}
    \argmax_{s'} \quad & \sum_{i \in D} u_i(s')\\
    \textrm{s.t.} \quad & u_i(s') \geq u_i(s) & \forall i \in D\\
    &s'_j = s_j & \forall j \in P\setminus D    \\
    \end{aligned}
\end{equation}


\begin{proposition}\label{prop:2pdom}
In two-player games, delegating to the Pareto Mediator is weakly dominant.
\end{proposition}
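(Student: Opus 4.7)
The plan is a direct case analysis on the opponent's delegation bit $d_{-i}$, exploiting the two-player-specific fact that ``two or more delegate'' collapses to ``both delegate.'' Fix player $i$, an arbitrary action $s_i \in S_i$, and an arbitrary opponent mediated strategy $(s_{-i}, d_{-i})$, and compare the payoffs to $i$ of the two mediated strategies $(s_i, 0)$ and $(s_i, 1)$ against this opponent.

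In the first case $d_{-i} = 0$, the delegator set $D$ has at most one element regardless of $d_i$, so the Pareto Mediator's two-delegator activation threshold is not reached and it returns the submitted profile unchanged (as the prisoner's dilemma example in Table~\ref{prisoners} already asserts); hence both $(s_i, 0)$ and $(s_i, 1)$ yield $u_i(s_i, s_{-i})$ for player $i$. In the second case $d_{-i} = 1$, if $i$ plays $(s_i, 0)$ then $D = \{-i\}$ again has only one element, the mediator still does nothing, and $i$ receives $u_i(s_i, s_{-i})$. If instead $i$ plays $(s_i, 1)$ then $D = \{i, -i\}$, the Pareto Mediator solves its constrained optimization, and by the feasibility constraint $u_i(s') \ge u_i(s) = u_i(s_i, s_{-i})$ the payoff to $i$ is at least $u_i(s_i, s_{-i})$. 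Combining the two cases shows that delegating never hurts, which is exactly weak dominance.

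The argument is not technically difficult; the main conceptual point to articulate is why the inactivation clause matters. When $i$ refuses to delegate, the opponent either also refuses (identical outcomes) or is the lone delegator (below the activation threshold, so again unchanged). Thus a non-delegating $i$ always faces the original-game outcome $u_i(s_i, s_{-i})$, and the Pareto constraint ensures that delegating is at least as good. The argument leverages the two-player assumption in an essential way: for $N \ge 3$ the other players can jointly cross the two-delegator threshold even when $i$ abstains, the mediator can then reshape $s_{-i}$, and the baseline against which $i$ evaluates delegating is no longer simply $u_i(s_i, s_{-i})$, which is presumably why the proposition is stated only for the two-player case.
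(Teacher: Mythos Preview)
Your proof is correct and follows essentially the same case analysis on $d_{-i}$ as the paper's own proof, using the inactivation clause when $|D|\le 1$ and the Pareto feasibility constraint when $|D|=2$. Your write-up is somewhat more explicit (you separately compute the non-delegating baseline in the $d_{-i}=1$ case and comment on why the argument breaks for $N\ge 3$), but the underlying argument is identical.
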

\begin{proof}
Let $s_m = (s_i, d_i)_{i=1,2}$ be a joint mediated strategy where agent $i$ delegates, $d_i = 1$, and $s' = M(s_m)$. Since the Pareto Mediator only changes agent actions if two or more agents delegate, if agent $-i$ does not delegate, $d_{-i} = 0$, then $M(s_m) = s$, and agent $i$ is indifferent between delegating or not, $u_i(s_m) = u_i(s)$. If agent $-i$ does delegate, then the mediator guarantees that agent $i$ is not worse off, $u_i(s_m) \geq u_i(s)$. 
\end{proof}
Proposition \ref{prop:2pdom} implies that in two player games, under mild conditions, players will always choose to delegate to a Pareto Mediator. Note that this property does not hold for the Punishing Mediator. However, in games with more than two players, not delegating may sometimes weakly dominate delegating to the Pareto Mediator. 

\begin{proposition}\label{prop:2ppure}
Any pure Nash equilibrium in a two-player game where both players delegate to the Pareto Mediator has at least as high total utility as any pure Nash equilibrium in the original game. 
\end{proposition}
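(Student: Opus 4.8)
The plan is to fix a pure Nash equilibrium of the mediated game in which both players delegate, $\hat s_m=((\hat s_1,1),(\hat s_2,1))$, with mediator output $\hat s'=M(\hat s_m)$ and realized payoffs $c_i=u_i(\hat s')$; and, separately, an arbitrary pure Nash equilibrium $s^*=(s_1^*,s_2^*)$ of $\Gamma$ with total utility $U^*=u_1(s^*)+u_2(s^*)$. The goal is $c_1+c_2\ge U^*$, which I would attack by contradiction. The one equilibrium condition I would lean on is the \emph{opt-out bound}: since the Pareto Mediator is inactive unless at least two players delegate, each player $i$ may deviate to not delegating while submitting any $a_i$, realizing the unmediated profile $(a_i,\hat s_{-i})$; ruling out all such deviations gives $c_i\ge\max_{a_i}u_i(a_i,\hat s_{-i})$. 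So each delegator's payoff already dominates its best-response value against the opponent's submitted action.

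Now suppose $c_1+c_2<U^*$. Since $\hat s'$ maximizes $u_1+u_2$ among profiles that are individually rational relative to the baseline $b=(u_1(\hat s_1,\hat s_2),u_2(\hat s_1,\hat s_2))$, and $s^*$ has strictly higher total utility, $s^*$ must violate an individual-rationality constraint at $b$; relabeling players, assume $u_2^*<u_2(\hat s_1,\hat s_2)$. The opt-out bound then gives $c_2\ge u_2(\hat s_1,\hat s_2)>u_2^*$, whence $c_1<U^*-u_2^*=u_1^*$. The key deviation is for player $1$ to keep delegating but re-submit $s_1^*$. Because $s^*$ is an equilibrium of $\Gamma$, $s_2^*$ best-responds to $s_1^*$, so $u_2(s_1^*,\hat s_2)\le u_2^*$; and the opt-out bound gives $u_1(s_1^*,\hat s_2)\le c_1<u_1^*$. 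Hence $s^*$ is individually rational at the new baseline, so the mediator's response $t$ there satisfies $\mathrm{total}(t)\ge U^*$. Equilibrium forces $c_1\ge u_1(t)$, and combining $\mathrm{total}(t)\ge U^*>c_1+c_2$ with $u_1(t)\le c_1$ yields the pivotal inequality $u_2(t)>c_2$.

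This is where the argument either closes or stalls, and I expect it to be the main obstacle. When each player has only two actions the proof finishes at once: player $1$'s two actions are $\hat s_1$ and $s_1^*$ (necessarily distinct under the assumed blocking), and on the row $\hat s_1$ player $2$'s payoff is capped by $c_2$ via player $2$'s opt-out bound, while on the row $s_1^*$ it is capped by $u_2^*<c_2$ via the best-response property of $s_2^*$; so \emph{every} cell has $u_2\le c_2$, contradicting $u_2(t)>c_2$ and refuting $c_1+c_2<U^*$. In general, however, the Pareto Mediator maximizes the delegators' \emph{joint} utility, so after player $1$ re-submits $s_1^*$ it may route the recovered surplus into a high-$u_2$ response $t$ whose row is neither $\hat s_1$ nor $s_1^*$ and is therefore bounded by neither constraint; player $1$'s own payoff need not rise, and the contradiction evaporates. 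Controlling this reallocation---equivalently, showing any such $t$ is already individually rational at the original baseline $b$, which reduces to $u_1(t)\ge u_1(\hat s_1,\hat s_2)$---is the crux, and I would expect it to need binary action sets or some further structural assumption on $\Gamma$ rather than to hold unconditionally.
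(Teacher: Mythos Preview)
You correctly diagnose that your argument does not close in general: once player~$1$ re-submits $s_1^*$, nothing prevents the mediator from routing the recovered surplus to player~$2$, and the hoped-for contradiction with $u_2(t)>c_2$ evaporates. The paper's proof is organized differently in a way that sidesteps this particular impasse. Rather than analyzing a single deviator and trying to bound both coordinates of one mediator output, it invokes \emph{both} players' unilateral deviations inside the delegating subgame. Writing $u$ for mediated payoffs and $v$ for original ones, the Nash conditions at $(\hat s_1,\hat s_2)$ give $c_1\ge u_1(M(s_1^*,\hat s_2))$ and $c_2\ge u_2(M(\hat s_1,s_2^*))$, hence the ``cross'' bound
\[
c_1+c_2\;\ge\;u_1\bigl(M(s_1^*,\hat s_2)\bigr)+u_2\bigl(M(\hat s_1,s_2^*)\bigr),
\]
so only \emph{one} coordinate of each of \emph{two different} mediator outputs must be controlled. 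The paper then does a case split on whether $v(s^*)$ Pareto-dominates each original-game corner $v(s_1^*,\hat s_2)$, $v(\hat s_1,s_2^*)$, combining the Nash property of $s^*$ in $\Gamma$ with the mediator's IR constraint $u_i(M(a))\ge v_i(a)$ to push each summand above the matching coordinate of $v(s^*)$. Your opt-out bound is never used; the whole argument stays inside the both-delegating subgame.

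That said, the reallocation phenomenon you flag as the crux also touches the paper's argument. In its first case the paper asserts that if $v(s^*)\succeq v(s_1^*,\hat s_2)$ then the mediator output at $(s_1^*,\hat s_2)$ \emph{componentwise} dominates $v(s^*)$; but the Pareto Mediator as defined guarantees only IR relative to the baseline and maximal \emph{total}, so a third profile (baseline $(0,0)$, $s^*$ giving $(1,1)$, alternative $(10,0.5)$) can violate the needed coordinate. So the two-deviation cross bound is the organizing idea you were missing, and it gets much further than the single-deviation route, but your instinct that some tie-breaking toward Pareto-optimal outputs or additional structure is needed to fully close the argument is not unfounded.
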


\begin{proof}
The proof is contained in the appendix. 
\end{proof}

Consider a two-player game in which the agents learn through best-response dynamics, that is, they iteratively switch to their best strategy given the other player's strategy. This dynamics, if and when it converges, results in a pure Nash equilibrium. Since delegating weakly dominates not delegating in two-player games, Proposition \ref{prop:2pdom} suggests that in this equilibrium both players will often delegate, in which case Proposition \ref{prop:2ppure} guarantees that the social welfare will be at least as high as what the same dynamics could find in the original game. Also, since the outcomes only change if both agents delegate, Proposition \ref{prop:2ppure} implies that all pure Nash equilibria in the mediated game have higher total utility than the lowest utility pure Nash equilibrium in the original game. 



While both the Punishing Mediator and the Pareto Mediator have different static properties, we argue that much more important is their performance in non-equilibrium dynamics. Since the Punishing Mediator uses the delegating agents to punish the non-delegating agents, overall all agents will have a very low reward, unless all agents simultaneously decide to delegate. If that is not the case, overall utility can be much lower than even in the original game without a mediator. In our Pareto Mediator, instead of using delegating agents to punish non-delegating agents, the mediator tries to improve their social welfare. As a result, in non-equilibrium situations and in scenarios where not everyone is delegating, the Pareto Mediator will tend to have much higher social welfare than the Punishing Mediator. 

In the next section, we provide empirical evidence, through simulations on a number of different games, that (1) learning agents do not all delegate to the Punishing Mediator; and (2) the social welfare induced by the Pareto Mediator is higher than both the original game and the Punishing Mediator game.  

\begin{figure}[t]
\centering
\includegraphics[width=.3\columnwidth]{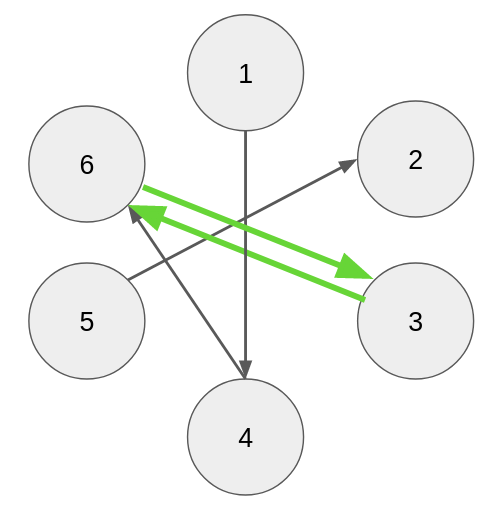}
\hspace{.1\columnwidth}
\includegraphics[width=.35\columnwidth]{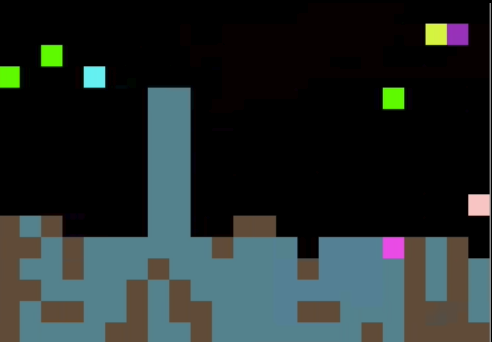}
\caption{In the matching game (left), agents simultaneously point to another agent. If both agents point to each other, they both get some positive reward, and otherwise get zero reward. In the cleanup environment (right), agents can choose to eat apples and get a positive reward, clean the river at a cost, or fire a beam to hurt the other player. If nobody cleans the river, apples will stop growing.}
\label{fig:games}
\end{figure}

\section{Experiments}
Agent-based simulations have long served as a tool in Game Theory for better understanding how agents dynamically learn in games \cite{davidsson2002agent, railsback2006agent}. While static analysis such as finding Nash equilibria is a useful tool, equally useful is the exercise of simulating how agents might play a particular game. In the experiments below, we use agents that are some form of reinforcement learning agents. The first three games are normal form games, and we use online learners, in particular epsilon-greedy agents where epsilon is decayed by $\frac{1}{t}$. In the third experiment, we test our approach on a sequential social dilemma, and use independent PPO \cite{schulman2017ppo} agents.  


\subsection{Random Normal Form Games}

\begin{figure}
\centering
\includegraphics[width=.3\columnwidth]{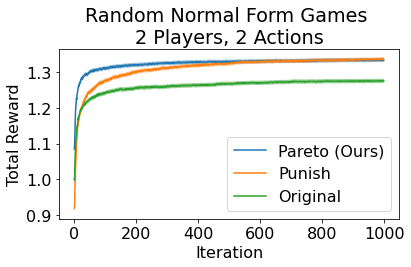}
\includegraphics[width=.3\columnwidth]{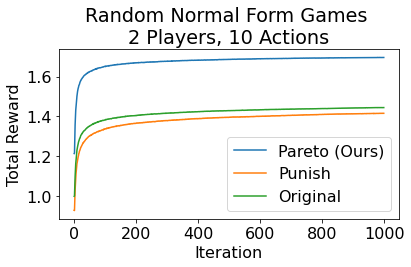}
\includegraphics[width=.3\columnwidth]{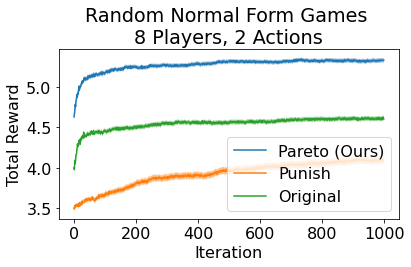}

\includegraphics[width=.3\columnwidth]{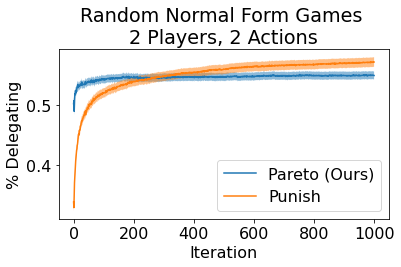}
\includegraphics[width=.3\columnwidth]{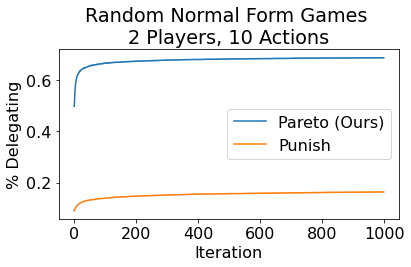}
\includegraphics[width=.3\columnwidth]{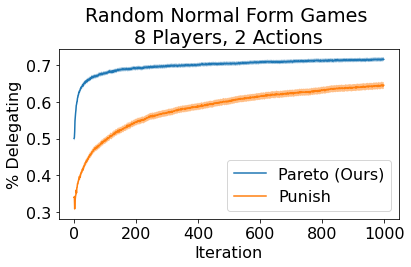}

\caption{\textbf{Top:} Average population reward with independent epsilon-greedy agents on random normal form games in the original as well as Pareto and Punishing mediated games. \textbf{Bottom:} Average percentage of game population agents who choose to delegate their actions to the mediators in corresponding Pareto and Punish games.}
\label{fig:normal_form_games}
\end{figure}

We create random normal form games by sampling outcome utilities independently from a continuous $[0,1)$ Uniform distribution. We then create new games that correspond to the mediated games from the Pareto Mediator and the Punishing Mediator. To create these games, we exactly solve the optimization problem by searching through all outcomes. Since these matrices grow exponentially in the number of players, we only include results on small games. As shown in Figure \ref{fig:normal_form_games}, we find that the Punishing Mediator can achieve total utility as high as the Pareto Mediator on random two-player, two-action games. However, as the number of actions or the number of agents increase, the performance of the Punishing Mediator becomes progressively worse. Interestingly, we also find that in games with larger action spaces, a very low percentage of agents choose to delegate actions to the Punishing Mediator. 

\subsection{Matching Game}

In the matching game (Figure \ref{fig:games}, left), every agent simultaneously picks a partner to match with. If two agents pick each other, they get a reward. Rewards are match-dependent and asymmetric. In these experiments, we sample rewards from a uniform distribution. The Pareto Mediator simply matches unmatched delegating agents with each other. The Punishing Mediator checks to see if any delegating agent is matched with any non-delegating agent and if so unmatches them. As shown in Figure \ref{fig:matching_game}, the Pareto Mediator gets much higher reward than the Punishing Mediator and the original game. We find that performance increases as the number of agents grows, intuitively because it becomes harder to find a match in the original game.  



\subsection{Restaurant Reservations}\label{sec:rest}
In the restaurant game, as shown in Figure \ref{fig:diagram}, agents can simultaneously make reservations at restaurants on a platform. Every restaurant has a different capacity level, and when it is exceeded each agent's utility decreases in proportion to the number of agents reserving at the same restaurant. Each agent has private utilities for each restaurant, but the platform only has access to public ratings. From these ratings, the platform builds a recommender system and uses these predicted ratings as utilities. In this experiment we use the Yelp Dataset \cite{yelp}, restricted to restaurants in Massachusets, and we only consider agents who have given 5 or more reviews. We then build a simple recommender system using collaborative filtering \cite{aggarwal2016recommender}.

\begin{figure}
\centering
\includegraphics[width=.3\columnwidth]{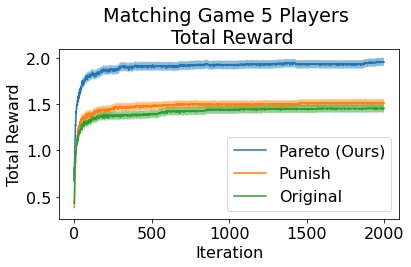}
\includegraphics[width=.3\columnwidth]{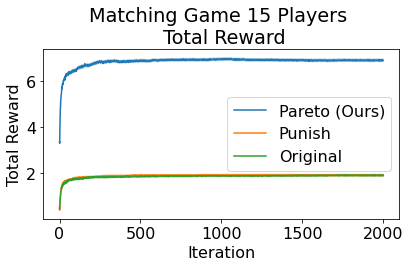}
\includegraphics[width=.3\columnwidth]{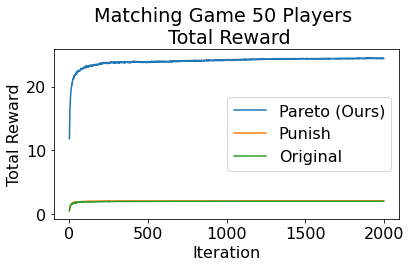}
\caption{In the matching game, as the number of agents increases, the Pareto Mediator achieves higher relative social welfare compared to the original game and the Punishing Mediator.}
\label{fig:matching_game}
\end{figure}

Formally, there are players $P$, and a set of restaurants $R$. For each $r \in R$, there is a restaurant capacity $c_r$. Every iteration, each agent $i$ takes an action $r_i$ which corresponds to picking a restaurant. Each agent has a restaurant-specific utility $u_i(r) = u^k_i(r) + \alpha u^u_i(r)$ where $u^k_i(r)$ is the known reward that is available to the platform via the recommender system and $u^u_i(r)$ is the agent-specific private reward, which we model as drawn from an independent Uniform $[0,1)$ distribution. The number of agents that choose a restaurant is given by $n_r = |\{i \in P : r_i = r\}|$. When $n_r \leq c_r$, each agent receives their restaurant-specific utility $u_i(s) = u_i(r_i)$. When $n_r > c_r$, each agent receives an expected value of snagging a table: $u_i(s) = \frac{c_{r_i}}{n_{r_i}}u_i(r_i)$. 

The Pareto Mediator works as follows. After every agent has submitted their actions, the non-delegating actions are taken as fixed. Next, new restaurant capacities are deduced and the Pareto Mediator approximately solves the max-utility objective with Pareto constraints by converting it into an assignment problem. To do so, the mediator creates a set of ``tables'' $T$, where each table $t$ is a unique remaining spot at a vacant restaurant $r(t)$. To account for the constraint that no delegating agent $i$ is worse off with a new choice $r'_i$ of restaurant instead of $r_i$, the Pareto Mediator creates a new utility function $\hat{u}$ whose value $\hat{u}_i(r')$ equals $u_i(r')$ if the restaurant is feasible ($u_i(r') \geq u_i(r_i)$), otherwise negative infinity. The Pareto Mediator then finds a bijection $f:D \mapsto T$ that solves the following assignment problem using a polynomial-time assignment problem solver \cite{crouse2016implementing}.
\begin{equation}\label{eq:assignment}
    \max_f \sum_{i \in D} \hat{u}_i(r(f(i))
\end{equation}
Note that this solution approximates the original social welfare objective, and we conjecture that improving it is sufficient for improving the overall performance of the Pareto Mediator. 

For the Punishing Mediator, if everyone delegates, we approximately solves the same assignment problem \eqref{eq:assignment}, but with $u$ instead of $\hat{u}$. If not everyone delegates, then the Punishing Mediator sends everyone to one restaurant that a non-delegating agent chose.  

We run two sets of experiments. In the first, we set $\alpha = 0$ so that the utility given by the recommendation system is the true agent reward. We include 300 restaurants and randomly sample restaurant capacities from $1$ to $10$.   
As shown in Figure \ref{fig:ssd} (left), the Pareto Mediator achieves higher total reward compared to the Punishing Mediator and the original game. The total utility achieved in this regime is nearly the same as the central planning solution, which is defined as the solution to \eqref{eq:assignment} with $D = P$ (all agents delegate). 
Furthermore, when we set $\alpha = 2$ so that the true reward is overshadowed by the unknown private reward (Figure \ref{fig:ssd}, center), total utility achieved by the Pareto Mediator is still higher than the original game and the Punishing Mediator. Notably, the Pareto Mediator reward is now much higher than the central planning solution. Since the central planner only has access to the recommendation system rewards, it often forces agents to go to restaurants they do not want to go to. By preserving autonomy through a voluntary mediator, higher social welfare can be achieved. 

\subsection{Sequential Social Dilemmas}
Finally, we test the effect of Pareto and Punishing mediators on independent reinforcement learners in the sequential social dilemma game Cleanup (Figure \ref{fig:games}, right) \cite{SSDOpenSource, leibo2017multi, Hughes2018, Jaques2019}. In this environment, two agents operate temporospatially in a grid-world containing apples and a river with waste in it. Waste spawns in the river periodically and inversely affects the frequency at which apples spawn in the world. Agents must balance the short-term gains from consuming apples for immediate reward with long term gains of cleaning the waist to increase the likelihood of more apples spawning. Agents are independently optimized with PPO \cite{schulman2017ppo}. They receive a top-down view of the world, can move in four directions, and must be near other objects in order to interact with them. Agents can also fire a beam at other nearby agents to inflict a reward penalty upon them. 

In the mediated versions of this game, during the first time step, agents can choose to delegate all further actions in the episode to the mediator, which will act on their behalf for the rest of the episode. The Pareto Mediator uses a joint policy that is trained to maximize cumulative reward. If one agent delegates but not the other, both agents play their original policies. But if both agents delegate, the mediator plays the joint pre-trained cooperative policy. The Punishing Mediator also plays the joint pre-trained cooperative policy if both agents delegate. But if only one delegates, the mediator plays a punishing policy that is pre-trained to minimize the opponent's reward.  

In Figure \ref{fig:ssd} (right), we show the average and minimum reward for both players over 100-episode windows throughout training. The Punishing Mediator achieves similar average utility as in the original game, but achieves much lower minimum reward than the original game. In contrast, when a Pareto Mediator option is provided, both the average and minimum population reward is much higher than in the original game and Punishing Mediator game. 
Agents playing the Pareto Mediator game learn to delegate 100\% of the time because doing so weakly dominated not delegating. However, agents playing the Punishing Mediator game never learn to delegate because most of the time when they delegate, the other agent was not delegating, and both agents received low reward by punishing or being punished.

\begin{figure}
\centering

\includegraphics[width=.3\columnwidth]{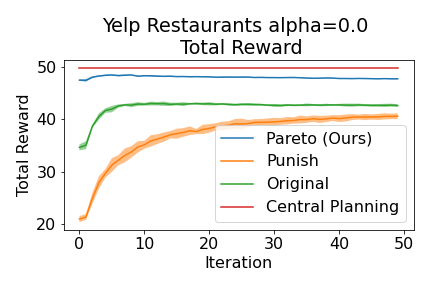}
\includegraphics[width=.3\columnwidth]{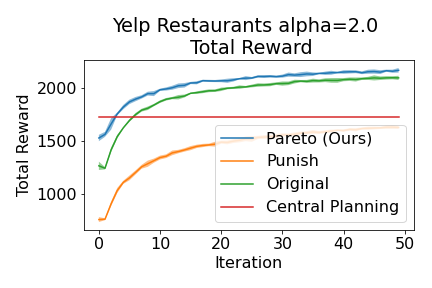}
\hspace{.05\columnwidth}
\includegraphics[width=.3\columnwidth]{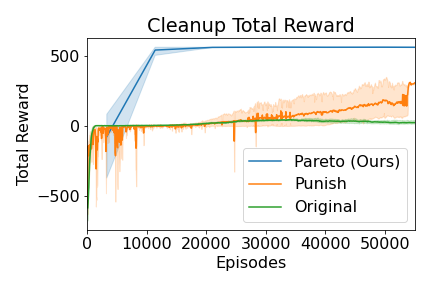}

\includegraphics[width=.3\columnwidth]{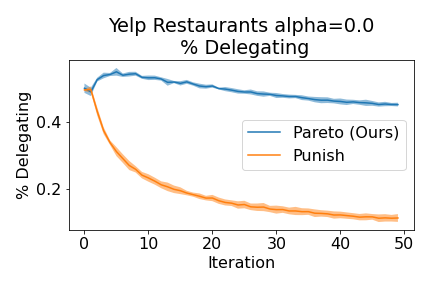}
\includegraphics[width=.3\columnwidth]{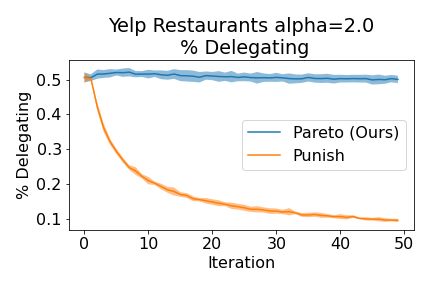}
\hspace{.05\columnwidth}
\includegraphics[width=.3\columnwidth]{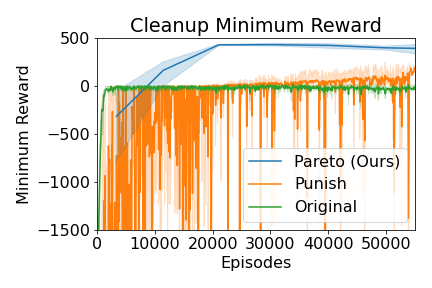}

\caption{\textbf{Left}: Total reward (top) and percentage of population delegating (bottom) in the restaurant game with $\alpha=0$ and $\alpha=2$. \textbf{Right}: Total reward (top) and minimum reward (bottom) in Cleanup.}
\label{fig:ssd}
\end{figure}

\section{Related Work}
\paragraph{Mediators and Commitment Devices.}
In economics there are many models for the design of mechanisms that maintain individual agent autonomy.  Program equalibria is a model where both players present a program to play on their behalf, which can then react to the program submitted by the other player, which allows for a kind of commitment device~\cite{tennenholtz2004program}.  More generally, commitment device games are a class of mechanisms which allow the designer to introduce an arbitrary class of commitment devices for the players to optionally access~\cite{kalai2010commitment}.  Our work builds off of a formally similar model of mediators, developed concurrently to commitment device games~\cite{tennenholtz2008game}.  There is a broad set of literature around mediators mostly focusing on analyzing their equalibria behavior in specific settings~\cite{peleg2010implementation,monderer2009strong,rozenfeld2007routing,peleg2007mediators,rozenfeld2006strong,ashlagi2009mediators}.  The most closely related is \cite{roth2020making} which proposes a general market matching algorithm for constructing a mediator.  Their construction ensures incentive compatibility of entering the market, and shows promising results for applications in labor markets.  However, though the aim of this work is similar, the mediator mechanism, its analysis, and the target domains are all quite distinct.

Correlated equilibria are a closely related concept to mediators in which agents are given the choice to follow the recommendations of a centralized recommender \cite{aumann1973subjecivity, foster1997calibrated}.  However, mediators are a stronger concept, as they allow the mediator not only to recommend actions, but to direct other agents to change their behavior based on which agents accept the recommendation, resulting in a strictly larger class of achievable equalibria.    

\paragraph{Mechanism Design and Solution Concepts.}
In the economics literature, designing incentives to motivate agents to behave in a desired fashion has long been studied in the field of mechanism design \cite{hurwicz2006designing}.  A popular objective in that literature is to design mechanisms which minimize the price of anarchy, which is the ratio of the worst Nash equilibrium to the maximal social welfare achievable by the mechanism \cite{koutsoupias1999worst,dubey1985inefficiency}.  The price of anarchy can be seen as a proxy for the true performance of the mechanism when deployed with learning agents.  An alternative model is to evaluate mechanisms based on how they empirically perform with popular learning models.  Learning in games is a broad field, and while many of these methods converge to popular equilibrium concepts in restricted settings~\cite{kalai1993rational,harada2006approximation,littlestone1994weighted}, many natural algorithms do not converge to Nash equilibria \cite{fudenberg1998theory}.  A construct that is often more predictive for these methods is risk dominance, which prioritizes one equilibrium over another if its utility is less sensitive to unpredictable behavior of other agents \cite{harsanyi1988general}.

Related to mechanism design is the field of social choice theory, which studies voting mechanisms which combine preferences to promote the general welfare~\cite{arrow2012social}.  There has been a large body of work proving impossibility results for social choice theory \cite{arrow2012social,myerson1983efficient,satterthwaite1975strategy,gibbard1973manipulation}. Though our work does not explicitly study social choice, it is closely related.  For instance, using a Pareto Mediator in the context of voting mechanisms would be a reasonable model for delegates or representatives.  Our approach avoids the impossibility results prevalent in social choice by having no notion of ``truthfully reporting your preferences''.

\paragraph{Multi-Agent Reinforcement Learning.}
Much work in multi-agent RL has focused on the setting of zero-sum games, to great success~\cite{heinrich2015fictitious,silver2017mastering,moravvcik2017deepstack,brown2017libratus, mcaleer2020pipeline, mcaleer2021xdo}.  However, in general-sum games, solutions are much less established.  In the context of this work, one line of work of particular relevance is the agenda around Cooperative AI~\cite{dafoe2020open}.  In broad strokes, this agenda lays out a set of problems which aim to use AI to further coordination and cooperation both in interactions between humans and AI and interactions between individual humans.  Our work can be seen as building in that direction by designing mechanisms to coordinate human actions while aiming to avoid reducing human autonomy.  One target domain of interest is that of studying RL systems in sequential social dilemmas~\cite{Perolat2017}.  This follows work on social dilemmas in game theory~\cite{Axelrod1981}, aiming to understand the effects of different strategies of promoting cooperation in sequential multi-agent environments~\cite{Foerster2018,Hughes2018,Eccles2019,Jaques2019,majumdar2020evolutionary,Letcher2019, roman2021accumulating}.  We propose our method as a mechanism for promoting cooperation without reducing the autonomy of the individual agents.

Our approach follows other work, using machine learning agents to better understand the effectiveness of certain mechanisms \cite{liu2020competing, johari2021matching}.  That work itself follows in a line of work aiming to better understand the multi-agent bandit setting more broadly \cite{cesa2016delay,shahrampour2017multi}.  We believe more theoretical insights from these directions could aid the analysis of the Pareto Mediator, though more work would be needed to achieve bounds that are applicable in our setting.

\section{Conclusion}
In this work we argued that mediators can serve as a middle ground between fully centralized control and anarchy. Existing mediators, such as the Punishing Mediator, 
achieve low social welfare when not at equilibrium. We propose the Pareto Mediator, which seeks to improve welfare of delegating agents without making any worse off. Through a series of experiments we show that the Pareto Mediator can be a simple and effective mediator with a wide range of potential applications.
Future work can extend Pareto Mediators to the transferable utility setting. In sequential domains, adding delegating options during the game instead of only at the beginning holds potential for preserving higher autonomy and addressing state-dependent uncertainty of agent utilities. Finally, the theoretical properties of the Pareto Mediator in static settings can be better understood, in order to provide guarantees and improved mechanisms. 



\bibliographystyle{abbrv}
\bibliography{mybib.bib}

\appendix
\section{Proof of Proposition 2}
\textbf{Proposition 2. }\textit{Any pure Nash equilibrium in a two-player game where both players delegate to the Pareto Mediator has at least as high total utility as any pure Nash equilibrium in the original game. }
\begin{proof}

Let $u_i(a, b)$ be the utility for player $i$ in the new game when player 0 takes action $a$ and player 1 takes action $b$. Similarly, $v_i(a, b)$ is the utility in the old game. To refer to an outcome for both players, we write $u(a,b)$ or $v(a,b)$. Define $U(a, b) := \sum_i{u_i(a, b)}$ and $V(a, b) := \sum_i{v_i(a, b)}$. If an outcome $u(a, b)$ Pareto dominates another outcome $u(a', b')$ then we write that as $u(a,b) \succeq u(a',b')$.  

Assume we have a pure Nash in the new game where player 0 takes action $a_0$ and player 1 takes action $a_1$ and that it has lower sum of utility than a pure Nash in the original game where player 0 takes action $b_0$ and player 1 takes action $b_1$. So $U(a_0,a_1) < V(b_0,b_1).$ 
\\
\\
Then it can't be the case that both $v(b_0,b_1) \succeq v(a_0,b_1)$ and $v(b_0,b_1) \succeq v(b_0,a_1)$. To see why, if it were the case, then $u(a_0,b_1) \succeq v(b_0,b_1)$ and $u(b_0,a_1) \succeq v(b_0,b_1)$ by definition of how we create the new game. But $u_0(a_0,a_1) \geq u_0(b_0,a_1)$ and $u_1(a_0,a_1) \geq u_1(a_0,b_1)$ because $u(a_0,a_1)$ is a Nash in the new game. So then $U(a_0,a_1) = u_0(a_0,a_1) + u_1(a_0,a_1) \geq u_0(b_0,a_1) + u_1(a_0,b_1) \geq v_0(b_0,b_1) + v_1(b_0,b_1) = V(b_0,b_1)$ which is a contradiction. 
\\
\\
Because one of the original "corners" isn't Pareto dominated by the original Nash, we can assume it is the bottom left one. Without loss of generality, assume $v(b_0,b_1) \not\succeq v(b_0,a_1)$. We know that $v_1(b_0,a_1) \leq v_1(b_0,b_1)$ because $v(b_0,b_1)$ is a Nash. So it must be that $v_0(b_0,a_1) > v_0(b_0,b_0)$. 
\\
\\
We know that $u_0(a_0,a_1) \geq u_0(b_0,a_1)$ and $u_1(a_0,a_1) \geq u_1(a_0,b_1)$ because $u(a_0,a_1)$ is a Nash in the new game. Suppose that $v_1(a_0,b_1) \geq v_1(b_0,b_1)$. Then $U(a_0,a_1) = u_0(a_0,a_1) + u_1(a_0,a_1) \geq u_0(b_0,a_1) + u_1(a_0,b_1) \geq v_0(b_0,a_1) + v_1(a_0,b_1) > v_0(b_0,b_1) + v_1(b_0,b_1) = V(b_0,b_1)$ which is a contradiction. So it must be that $v_1(a_0,b_1) < v_1(b_0,b_1)$. 
\\
\\
Now since $V(b_0,b_1)$ is a Nash in the original game, $v_0(a_0,b_1) \leq v_0(b_0,b_1)$. Then in the old game, $v(b_0,b_1) \succeq v(a_0,b_1)$, so in the new game $u(a_0,b_1) \succeq v(b_0,b_1)$. But this causes a contradiction because then $U(a_0,a_1) = u_0(a_0,a_1) + u_1(a_0,a_1) \geq u_0(b_0,a_1) + u_1(a_0,b_1) \geq v_0(b_0,a_1) + u_1(a_0,b_1) > v_0(b_0,b_1) + v_1(b_0,b_1) = V(b_0,b_1)$. 
\end{proof}

\section{Description of Experiments}
\subsection{Yelp Restaurant Data}
In this experiment we use the Yelp Dataset \cite{yelp}, restricted to restaurants in Massachusets, and we only consider agents who have given 5 or more reviews. We then build a simple recommender system using collaborative filtering \cite{aggarwal2016recommender}. All code and data will be released for the final version. 

\subsection{Sequential Social Dilemmas}

The Pareto and Punish Mediator PPO policies used in the Cleanup Sequential Social Dilemma game were pretrained with independent optimization for each agent to maximize the sum of agent rewards (for Pareto) or the negative of the other agent's reward (for Punish).

The multiagent PPO optimization process was built on top of the RLlib framework \cite{pmlr-v80-liang18b}. Each independently learning agent used a single experience gathering worker process with a single environment.

The CNN-LSTM network model was shared between the actor and critic. The model was comprised of one CNN layer with six 3x3 filters, stride 1 and relu activation followed by a two fully-connected layers size of 32 with tanh activations, then an LSTM with cell size 256, and finally two branching fully connected layers for discrete action logits and value function outputs.

Any hyperparameters not listed below were set to the default RLlib version 0.8.5 values.

\begin{table}[ht]
\centering
\begin{tabular}{ll}
discount factor & 0.99 \\
GAE $\lambda$ & 1.0 \\
entropy coeff & 0.001 \\
clip param & 0.3 \\
KL coeff & 0.2 \\
KL target & 0.01 \\
value function loss coeff & 0.0001 \\
learning rate & 0.0001 \\
optimizer & Adam \cite{DBLP:journals/corr/KingmaB14} \\
grad clip & 40 \\
train batch size & 1000 \\
sgd minibatch size & 250 \\
num sgd iters on train batch & 10 \\
LSTM unroll length & 20
\end{tabular}
\label{Tab:ppo-params}
\caption{PPO hyperparameters}
\end{table}

We used a single NVIDIA TITAN X GPU for all experiments and used 8 CPU cores. We will release all the code for all experiments in the final version.

\end{document}